\def\@biblabel#1{}
\theoremstyle{plain}
\newtheorem{prop}{Proposition}
\theoremstyle{definition}
\newtheorem{example}{Example}
\theoremstyle{remark}
\title{Analytic models for active shooter incidents with civilian resistance}
\author{Liang Hong\footnote{Liang Hong is a Professor in the Department of Mathematical Sciences,  The University of Texas at Dallas, 800 West Campbell Road, Richardson, TX 75080, USA. Tel.:~972-883-2161. Email address: liang.hong@utdallas.edu.}}
\date{\today}
\begin{document}

\maketitle

\begin{abstract}
The number of active shooter incidents in the US has been increasing alarmingly.  It is imperative for the government as well as the public to understand these events.  Though both analytic and agent-based models have been proposed for studying active shooter incidents, there are only a few analytic models in the literature,  and none incorporate civilian resistance.  This article analytically investigates the survival probability of a civilian during an active shooter incident when he can hide, fight, or run, depending on whether or not he is in a closed arena and whether or not he is armed.  The key findings are (i) the civilian's chance of survival decreases over time,  irrespective of his action; (ii) there are desperate situations in which a civilian should fight back even if he is unarmed; (iii) carrying a firearm does not always increase the civilian's chance of survival during an active shooter incident; (iv) carrying a firearm makes ``resistance'' more likely to be the optimal action of the civilian; (v) ``hide'' might be the best action even if the civilian is armed; (vi) more armed civilians might increase a civilian's chance of survival, but it is not always the case.
\smallskip

\emph{Keywords and phrases:} Analytic model; armed resistance; domestic terrorism; mass shooting; optimal action;  probability of survival.
\end{abstract}

\section{Introduction}
The active shooter incident has become a regular event in our society.  It has happened in all kinds of public places, such as churches,  schools,  supermarkets, theaters, and shopping malls, just to name a few.  The FBI defines an active shooter as one or more individuals actively engaged in killing or attempting to kill people in a populated area (e.g., FBI 2024).  The rate of active shooter incidents in the US has been increasing at an alarming rate in the past two decades (e.g.,  FBI 2016, 2019,  2023, 2024; Soni and Tekin 2020, The Violence Project 2020, Wiki 2023).  These violent events  lead to the deaths of multiple victims, cause lasting trauma to survivors, and instill fear in other people.  Multiple government agencies, including the Department of Homeland Security,  the Department of Justice,  the FBI,  and many local police departments, have been working tirelessly to find better ways to prevent and respond to active shooter incidents.  In academia,  scholars have been studying active shooter incidents for years.  In terms of research goals,  the literature primarily focuses on identifying key factors in mass shooting events, such as access to weapons,  domestic violence, legislation, police response,  shooter profile, etc; see, for instance,  Gius (2015),  Blum and Jaworski (2017),  Cabrera and Kwon (2018),  Kwon and Cabrera (2019),  Silva and Greene-Colozzi (2021), Fridel (2021), Geller et al. (2021). As far as research methods are concerned,  some authors rely on quantitative studies; see, for example, Lee (2013),  Kingshott and McKenzie (2013), Lemieux (2014),  Metzl and MaclEish (2015),  Dong (2019), Ihatsu et al. (2022).  In particular,  Lemieux (2014) used cross-section analysis to argue that both gun culture and firearm laws have significant impacts on deaths by guns. Dong (2019) applied qualitative content analysis to find evidence that depictions of firearms in mass media coverage of mass shootings generally constituted firearms characteristics that produced feelings of fear in the intended audience.  Ihatsu et al. (2022) performed an inductive content analysis on several police pre-trial investigation records and reports by the investigation commission to determine attributes of violence resembling terrorism by approaching the subject from the position of first responders.

However, most articles on active mass shootings seem to be based on quantitative studies and employ some types of quantitative models.  There are three major kinds of quantitative models in the literature.  The first kind is a statistical model.  Usually, such a statistical model is fitted using real-world data to gain insight into active shooter incidents.  For example,  Lankford (2016) applied analysis of variance to empirically demonstrate that fame-seeking rampage shooters have existed for decades and have become more common in recent years and that these offenders tend to kill and wound significantly more victims.  Lee et al. (2018) used a Bayesian Poisson regression model to study the casualty rate during an active shooter event with three parameters: civilian evacuation time, the response of police,  and firearm discharge by the shooter and police.  They found that immediate evacuation would decrease the casualty rate,  and so would the rapid and well-directed deployment of first responders with high-precision discharge abilities.  Lei and MacKenzie (2024) proposed a hierarchical model to quantify the risk of mass shootings at specific locations.  Cao et al. (2024) applied regression analysis to develop a model of mass shooting severity; their studies suggested that factors, such as having more firepower at the shooting scenes, lack of employment, being married, and having more mental health issues, all increase casualties in mass shootings.  

The second kind is an agent-based simulation model.  This seems to be the dominating model in the literature. 
For example,  Hayes et al. (2014) applied an agent-based simulation model to study the potential effectiveness of banning assault weapons and high-capacity magazines.  According to their results, banning assault weapons will not affect the number of people killed or wounded in a mass shooting; however, a ban on high-capacity magazines will result in a small reduction in the number of deaths during a mass shooting.   Anklam et al. (2015) used an agent-based simulation model to find that a decrease in the authority's response time leads to reduced victim casualties.  Gunn et al.  (2017) proposed a solution to optimizing guidance delivery for active shooter events.   Steward (2017) employed an agent-based simulation model to study how enforcement response time, civilian response strategy, and cognitive delay of the civilians might affect the expected number of casualties in a classroom shooting.  She found that a slower police response time and having all civilians hide can increase the expected number of civilian casualties. Glass et al. (2018) proposed an agent-based simulation model to investigate the means of mitigating the risk of a single active shooter incident within a large event venue by increasing the number of armed off duty law enforcement officers available to respond to the incident.  Their results show a negative logarithmic relationship between the number of civilians shot and the number of off-duty law enforcement officers available.  Lu et al. (2023) proposed an agent-based model that incorporates both the killing force and counterforce; their model suggested that there is an optimal response time for the police to minimize civilian casualties. 

The third kind is an analytic model.  There is a paucity of analytic models in the literature. One exception is Sporrer (2020), who applied several analytic models to study active shooter events in different types of arenas: closed, open, and complex.  So far no analytic models have accounted for civilian resistance.  The purpose of this article is to investigate the survival probability of a civilian during an active shooter event when the civilian can fight, hide, or run, depending on whether or not the civilian is in a closed arena and whether or not the civilian is armed.  Similar to Sporrer (2020), we will study active shooter incidents for the aforementioned three kinds of arenas.  Different from Sporrer (2020) and other previous authors who focused on the expected number of civilian casualties,  we are mainly interested in the best action of a civilian that maximizes his probability of survival and we allow ``fight'' as a possible action.  We consider the case of a single armed civilian and the case of multiple armed civilians for each type of arena.   To preview the key results,  the following findings will be established across different scenarios:
\begin{enumerate}
\item[(i)]A civilian's chance of survival decreases over time regardless of the action he chooses.
\item[(ii)]There are some desperate situations in which a civilian, armed or not, should fight back.  This confirms the prediction made in Section~5.2 of Sporrer (2020). 
\item[(iii)]Carrying a firearm might increase a civilian's chance of survival in some cases. However,  carrying a firearm does not necessarily increase a civilian's chance of survival in every case. 
\item[(iv)]Everything else being equal, carrying a firearm  makes ``resistance'' more likely to be the best action of a civilian. 
\item[(v)]There are some circumstances where ``hide'' is the best option even if a civilian is armed and the authority cannot arrive quickly. 
\item[(vi)]A civilian's chance of survival is not necessarily an increasing function of the number of armed civilians. Indeed, when the number of armed civilians exceeds a threshold,  a civilian's chance of survival will decrease due to the increased likelihood of ``friendly fire''.
\end{enumerate}
These findings can be applied by a variety of audiences. For example,  law enforcement agencies can learn from (i) that a quick response to an active shooter incident is critical.  Parents who are looking for schools for their children can also gain insight from (i): the distance between the school and the nearest police station is an important variable to consider especially when a school is located in a rural area.  For another example, (v) dispels the misconception in some folks' minds that they should always fight back during an active shooter incident if they are armed.

Finally,  a remark on the choice of our terminology  ``Fight/Hide/Run''. This is similar to the term ``Avoid/Barricade/Counter'' used in the ABC active shooter training at many workplaces, except that the former is slightly more general than the latter.  If the arena is closed (e.g., a school or office building), they are the same.  However, if the arena is open like a theater or concert hall,  ``hide'' will be a more appropriate term than ``barricade''. 

The remainder of this article is organized as follows.  In Section~2, we study the case where there is only one possibly armed civilian.  We will derive the survival probabilities of a civilian over time and use them to decide the optimal action of a civilian.  Next,  in Section~3, we investigate the case where there are multiple possibly armed civilians.   Finally,  we conclude the article with some remarks in Section~4.  

\section{One possibly armed civilian}

In this section, we assume there is one civilian of focus who may or may not be armed.  No other civilians might fight back.  In addition, the following common assumptions are taken throughout this section. 

\begin{enumerate}
\item[(a)]There is only one active shooter. He shoots indiscriminately into the crowd unless he meets resistance.  He will always kill those who fight first if he meets resistance.   
\item[(b)]The authority will arrive in $T_2$ minutes, where $T_2$ is a positive integer.  When the authority arrives,  the shooter will kill himself or surrender himself to the authority; otherwise, the authority will neutralize the shooter immediately.
\item[(c)]All civilians are rational decision-makers. They make their decisions without collaborating with others. Each of them wants to maximize their own chance of survival.
\item[(d)]At the beginning of each minute,  a civilian makes his decision for his action during that minute and sticks to it until the beginning of the next minute.
\item[(e)]All civilians will be hit by the shooter with equal probabilities. 
\item[(f)]A bullet strikes at most one person. That is, there are no secondary casualties.
\end{enumerate}

\subsection{Closed arena}
Similar to Sporrer (2020),   a closed arena is understood to be an area that does not offer people the opportunity to exit.  A typical example is a classroom during a campus shooting.  When such an event occurs,  the civilian of focus can find himself either already in a closed arena or in the hallway where one can quickly enter a closed arena.   Let us first examine the case where a civilian is already in the closed arena. 

\subsubsection{Already in the closed arena}
This is the situation where a student or teacher finds himself in a classroom when a campus shooting occurs.  In this case,  we also assume the following:
\begin{enumerate}
\item[(i)]There are $N$ civilians in the closed arena, where $N$ is a positive integer.  The active shooter is not considered as a civilian.
\item[(ii)]The shooter is already at the only door of the closed arena.  There is no other place to exit the room.  Therefore,  the civilian of focus can either fight or hide.  A civilian who does not fight will be killed by the shooter with equal probability.  The shooter will walk around until every civilian is dead. 
\item[(iii)]The shooter is equipped with a rifle, and his killing rate is $m$ people per minute where $m$ is a positive integer and $N=T_1 m$ for some positive integer $T_1$.   He carries plenty of magazines that are far more than enough to kill all the $N$ civilians.  His rifle will not malfunction during the shooting.  He plans to kill as many civilians as possible.
\item[(iv)]If an unarmed civilian fights back, his chance of success will be $p_1$. If an armed civilian fights back, his probability of success will be $p_2$, where $p_2>p_1>0$. 
\end{enumerate}
To avoid repetitions in our presentation, we will use $p$ to denote the survival probability of the civilian of focus, where $p=p_1$ if he is unarmed and $p=p_2$ if he is armed. \\

It is clear from our Assumption (iii) that $T_1$ is the number of minutes the shooter needs to kill all the civilians in the arena.  Thus, $T_1$ plays a vital role in our analysis.  In particular, we need to consider two cases: (I)~$T_1>T_2$ and (II)~$T_1\leq T_2$.  When $T_1>T_2$, the authority will arrive before the shooter can kill all the civilians in the arena; hence, every civilian should stand a nonzero chance of survival anytime.  However, if $T_1\leq T_2$, the shooter will murder all the civilians before the authority arrives; as a result, the chance of survival of each civilian should be zero after $T_1$ minutes.  Our analysis below confirms this intuition. \\

\noindent \textbf{Case I: $T_1>T_2$}\\

\noindent For the first minute, if the civilian fights, his survival probability will be $p$; if he hides, his chance of survival will be $(1-m/N)$.  For the latter probability, note that the shooter will kill $m$ civilians in the first minute and each of the $N$ civilians is equally likely to be among these $m$ unfortunate victims.  It follows that the civilian's chance of being killed during the first minute is ${N-1 \choose m-1}/{N \choose m}=m/N$,  implying the chance of survival is $(1-m/N)$.  Note that if the civilian fights,  he will either end the shooting or get himself killed right away.  However, if he hides, he may still choose to fight or hide in the next minute provided he has survived the previous minute. Therefore,  if the civilian fights during the second minute, his chance of survival equals
\begin{equation}
P(\text{succeed in fighting back})\times P\{\text{survived the first minute}\}=p\left(1-\frac{m}{N}\right).
\end{equation}
But if he decides to hide again during the second minute, his survival probability is 
\begin{eqnarray}
& & P(\text{survive the second minute $\mid$ survived the first minute})\times P\{\text{survived the first minute}\}  \nonumber \\
&=& \left[\frac{(N-m)-m}{N-m}\right] \left(1-\frac{m}{N}\right)=1-\frac{2m}{N}.
\end{eqnarray}
Following the same line of reasoning, we can derive the civilian's survival probabilities at other times up to $T_2$ minutes. 
After $T_2$ minutes,  the authority arrives. Thus, the civilian is surely to live regardless of his action,  provided he has survived the previous $T_2$ minutes.  That is,  the conditional probability of survival given that the civilian has survived the first $T_2$ minutes is $1$.  Therefore, the survival probability of the civilian beyond $T_2$ minutes is
\begin{equation}
P(\text{survived the first $T_2$ minutes})\times 1=\left(\frac{N-T_2m}{N}\right)\times 1=1-\frac{T_2m}{N},
\end{equation}
where $1$ stands for the civilian's conditional probability of survival given that he has survived the first $T_2$ minutes.\\

\begin{table}[!th]
\begin{center}
\begin{tabular}{cccccccc}
\hline
minute & $1$ &  $2$ & $\ldots$  & $T_2$ & $T_2+1$ & $\ldots$ & $T_1$\\
\hline
fight & $p$ & $p\left(1-\frac{m}{N}\right)$   & $\ldots$  & $p\left[1-\frac{(T_2-1)m}{N}\right]$  & $1-\frac{T_2m}{N}$ & $\ldots$  & $1-\frac{T_2m}{N}$\\
hide & $1-\frac{m}{N}$ & $1-\frac{2m}{N}$ & $\ldots$  &  $1-\frac{T_2m}{N}$ & $1-\frac{T_2m}{N}$ & $\ldots$  & $1-\frac{T_2m}{N}$ \\
\hline
\end{tabular}
\end{center}
\caption{Survival probabilities of the civilian of focus in a closed arena when $T_1>T_2$, where the $i$-th  cell of the ``fight'' row is the survival probability of the civilian when he hides for the first $(i-1)$ minutes and starts to fight back at the beginning of the $i$-th minute.}
\label{table:sp1.1}
\end{table}

Table~\ref{table:sp1.1} provides the civilian's survival probabilities over time, where the $i$-th  cell of the ``fight'' row is his survival probability when he hides for the first $(i-1)$-th minutes and starts to fight back at the beginning of the $i$-th minute.  Table~\ref{table:sp1.1} offers some insight into civilian resistance during an active shooter incident.  First,  if the civilian decides to hide, his chance of survival decreases over time until the authority arrives.  The same can be said for the civilian if he plans to fight back. Since $N=T_1m$ and $T_2<T_1$, we know that $1-T_2/m>0$.  That is,  if the authority is able to arrive quickly,  a civilian will have a positive chance of survival if he decides to hide.  This means the authority response time is critical in an active shooter event.  For parents who are considering new schools for their children, they might want to take this into consideration.  Everything else being equal,  a school that is close to a police station or staffed with an armed guard should be chosen over the one that is not.  Also,  some schools in rural areas are usually far away from the authority, they should consider adding an armed guard if they do not have one yet.

Next,  if $p>(1-m/N)$, then the civilian should fight back at the inception of the shooting.  More generally, if  $p[1-(k-1)m/N]>(1-km/N)$ for some $1\leq k\leq T_2$ but $p[1-(j-1)m/N]<(1-jm/N)$ for $1\leq j\leq k-1$,  then a civilian should fight back during the $k$-th minute; otherwise,  ``hide'' will be a better option.   For an unarmed civilian (i.e., $p=p_1$), this analytically confirms the prediction made in Sporrer (2020) that there are some desperate situations where the best option of an unarmed civilian is to fight the shooter.  This shows that hiding is not always the only rational option for an unarmed civilian. On the other hand,  if $p[1-(k-1)m/N]<(1-km/N)$ for all $1\leq k\leq T_2$, or, equivalently,  $p<[N-km]/[N-(k-1)m]=1-m/[N-(k-1)m]$ for all $1\leq k\leq T_2$,  then a civilian should hide all the way through.  Note that $1-m/[N-(k-1)m]$ is decreasing in $k$; therefore,  the condition $p<1-m/[N-(k-1)m]$ for all $1\leq k\leq T_2$ is equivalent to the condition $p<1-m/(N-m)$.

Moreover, if $p_2>1-m/[N-(k-1)m]$ for some $1\leq k\leq T_2$, then $p_2>\max\{p_1, 1-m/[N-(k-1)m]\}$, showing that carrying a firearm may increase a civilian's chance of survival in some cases.  However, carrying a firearm does not always increase one's chance of survival.  For example, if  $p_2<1-m/[N-(k-1)m]$ for all $1\leq k\leq T_2$, then a civilian would better hide throughout even if he is armed.

Finally,  it is possible that $p_2>1-m/[N-(k-1)m]$ but $p_1<1-m/[N-(k-1)m]$ for some $1\leq k\leq T_2$.  In this case, carrying a firearm increases the odds that ``fight'' is the best option.  \\

\noindent \textbf{Case II: $T_1\leq T_2$}\\

By similar analysis as in the case where $T_1>T_2$, we can obtain the civilian's survival probabilities over time that are given by Table~\ref{table:sp1.2}.

\begin{table}[!th]
\begin{center}
\begin{tabular}{ccccccccc}
\hline
minute & $1$ &  $2$ & $\ldots$  & $T_1$-1 & $T_1$ & $T_1+1$ & $\ldots$ & $T_2$\\
\hline
fight & $p$ & $p\left(1-\frac{m}{N}\right)$ & $\ldots$ & $p\left[1-\frac{(T_1-2)m}{N}\right]$ & $p\left[1-\frac{(T_1-1)m}{N}\right]$  & $0$  & $\ldots$& $0$\\
hide & $1-\frac{m}{N}$ & $1-\frac{2m}{N}$ & $\ldots$  & $1-\frac{(T_1-1)m}{N}$ &  $0$ & $0$ & $\ldots$  & $0$ \\
\hline
\end{tabular}
\end{center}
\caption{Survival probabilities of the civilian of focus in a closed arena when $T_1 \leq T_2$, where the $i$-th  cell of the ``fight'' row is the survival probability of the civilian when he hides for the first $(i-1)$ minutes and starts to fight back at the beginning of the $i$-th minute.}
\label{table:sp1.2}
\end{table}

A few remarks in order.  First,  if the civilian wants to hide and wait for the authority, his chance of survival will decrease over time and become zero before the authority arrives.  The same can be said for the civilian if he plans to fight back.  That is,  if the authority is unable to arrive quickly (e.g., a campus shooting in a rural area),  the civilian will stand no chance of surviving in a closed arena shooting after a certain time point, regardless of whether he decides to hide and wait for the authority.

Second,  if $p>(1-m/N)$, then a civilian should fight back right away.  More generally, if  $p[1-(k-1)m/N]>(1-km/N)$ for some $1\leq k\leq T_1$ but $p[1-(j-1)m/N]<(1-jm/N)$ for all $1\leq j\leq k-1$,  then a civilian should fight back during the $k$-th minute; otherwise,  he had better to hide.  Again, the aforementioned prediction made in Sporrer (2020) is confirmed here analytically.   It is worth underlining that even if $p[1-(k-1)m/N]<(1-km/N)$ for all $1\leq  k\leq T_1-1$, or, equivalently,  $p<1-m/[N-(k-1)m]$ for all $1\leq k\leq T_1-1$,  the civilian should  fight in the $T_1$-th minute because his chance of survival is zero if he hides during the $T_1$-th minute.  

We see again that,  if $p_2>1-m/[N-(k-1)m]=\max\{p_1, 1-m/[N-(k-1)m]\}$ for some $1\leq k\leq T_2$, then carrying a firearm will increase a civilian's chance of survival.  Once again, this does not imply that carrying a firearm will necessarily increase one's chance of survival because it is possible that $p_2<1-m/[N-(k-1)m]$ for all $1\leq k\leq T_1$.  Hence, an armed civilian should not always choose to fight. If $p_2>1-m/[N-(k-1)m]$ but $p_1<1-m/[N-(k-1)m]$ for some $1\leq k\leq T_1$,  then carrying a firearm will increase the chance of civilian resistance when everything else being equal.  

\subsubsection{Hallway interaction}
A typical example of this is when a student or teacher finds himself in the hallway when a campus shooting happens.  Besides the common assumptions in Section~2, we also assume the following:
\begin{enumerate}
\item[(1)]There are $M$ civilians in the hallway, where $M$ is a positive integer.  
\item[(2)]There are $K$ separate closed arenas close to each civilian. The shooter will search these $K$ closed arenas one by one to look for his victims. The shooter will not re-enter a closed arena he has searched.  Once the shooter enters a closed arena, he will not move to another closed arena until he kills all civilians in the current arena.
\item[(3)]Each civilian can either run or hide in one of the $K$ closed arenas. In addition, the civilian of focus can also fight.
\item[(4)]If an unarmed civilian fights, his chance of success will be $\tilde{p}_1$.  If an armed civilian fights back, his probability of success will be $\tilde{p}_2$, where $\tilde{p}_2>\tilde{p}_1>0$.  Note that $\tilde{p}_1$ may or may not equal to $p_1$. Similarly,  $\tilde{p}_2$ may or may not equal to $p_2$.   We will  use $\tilde{p}$ to denote the survival probability of a civilian, where $\tilde{p}=\tilde{p}_1$ if he is unarmed and $\tilde{p}=\tilde{p}_2$ if he is armed. 
\item[(5)]If a civilian runs, he can safely exit the area with a probability of $p_r>0$.
\item[(6)]If a civilian decides to hide in one of the $K$ nearby closed arenas, he will succeed with a probability  $p_h>0$. 
\item[(7)]The hallway interaction ends within a minute.
\item[(8)]The authority will not arrive during hallway interaction. 
\item[(9)]There will be $N$ civilians in each of the $K$ closed arenas once the hallway interaction ends, where $N<M/K$.
\end{enumerate}
The decision of the civilian of focus during hallway interaction (i.e., the first minute) is easy.  If $p_h>\max \{p_r, \tilde{p}\}$, then he should hide into a closed arena; if $p_r>\max \{p_h, \tilde{p}\}$, then he should run;  if $\tilde{p}>\max \{p_r,  p_h\}$, then he should fight back.   

At the beginning of the second minute, the shooter may enter the closed arena where the civilian of focus is hiding with a probability $1/K$.  If this happens, then the results in Section~2.1.1 will apply from the second minute to the $\max\{T_1+1, T_2+1\}$-th minute.   However, if the shooter enters an arena other than where the civilian of focus is hiding at the beginning of the second minute,  which happens with a probability $1-/K^2$, then the civilian of focus will hide from the second minute until the $(T_1+1)$-th minute.  In this case, if $T_2\leq T_1+1$, the civilian of focus will surely survive.  However, if $T_2>T_1+1$, then at the beginning of the $(T_1+2)$-th minute   we will be in the same situation as the one we face at the beginning of the second minute,  except now $K$ is being replaced by $K-1$.  Then, the shooter might enter the arena where the civilian is hiding with a probability $1/(K-1)$, in which case the results in Section~2.1.1 will apply again from the $(T_1+2)$-th minute until the  $\max\{2T_1+1, T_1+T_2+1\}$-th minute.  The analysis for the rest of the time until the $(KT_1+1)$-th minute is similar.  The larger $K$ is, the longer the whole event will last and hence the more likely the civilian of focus will have a nonzero chance of survival as time passes.

\subsection{Open arena}
By an open area, we mean an area that allows people to exit in an unconstrained manner. A typical example is an outdoor area for a festival.  Besides the common assumptions in Section~2,  we make the following assumptions:
\begin{enumerate}
\item[(1)]There are $N$ civilians in the open arena, where $N$ is a positive integer.
\item[(2)]The shooter is equipped with a rifle, and his killing rate is $m$ people per minute, where $m$ is a positive integer. The shooter carries plenty of magazines that are far more than enough to kill all the $N$ civilians.  His rifle will not malfunction during the shooting.  He plans to kill as many civilians as possible. 
\item[(3)]The crowd exits the open arena at a rate of $e$ people per minute, where $e$ is a positive integer and $e<N-m$.  For each minute,  none of the $e$ civilians who exit the arena is among the $m$ civilians who are killed during that minute. 
\item[(4)]There is a positive integer $T_1$ such that $N=T_1 (e+m)$ for some positive integer $T_1$.  
\item[(5)]Every civilian can hide or run.  In addition, the civilian of focus can also fight.  The shooter will hit anyone who does not fight with equal probability.
\item[(6)]If the civilian of focus runs,  he can safely exit the arena in a minute with a probability $p_r>0$. That is,   with a probability $p_r$, he will be one of the $e$ civilians who can safely exit the arena.
\item[(7)]If an unarmed civilian fights, his chance of success will be $p_1$.  If an armed civilian fights back, his probability of success will be $p_2$, where $p_2>p_1>0$.  We still use $p$ to denote the survival probability of a civilian when he fights back, where $p=p_1$ if he is unarmed and $p=p_2$ if he is armed. 
\end{enumerate}

\subsubsection{Case I: $T_1>T_2$}
In this case, the incident will not last beyond time $T_2$.  Note that if a civilian chooses to run, he can either exit the area or get killed in a minute.  If he fights back, he will either subdue the shooter with a probability $p$ or get killed immediately.  But if he hides, he can still choose to fight,  hide,  or run in the next minute.  Employing a similar argument as in Section~2.1.1, we can obtain the survival probabilities of the civilian of focus over time.  These probabilities are summarized in .   For example,   the civilian is still in the area at the beginning of the second minute if and only if he chooses to hide in the first minute.  This happens with a probability $1-m/(N-e)$ because, according to our assumption, none of the $e$ civilian who exit the open arena is shot during the first minute.  It follows that there are $N-(m+e)$ civilians left in the arena at the end of the first minute.  If he chooses to hide again at the beginning of the second minute, then his chance of surviving during the second minute would be 
\begin{equation}
\left(1-\frac{m}{N-e}\right)\left[1-\frac{m}{N-(e+m)-e}\right]=\prod_{k=1}^2\left[1-\frac{m}{(N-e)-(k-1)(e+m)}\right].
\end{equation}

\begin{table}[!th]
\begin{center}
\begin{tabular}{ccccc}
\hline
minute & $1$ &  $2$ & $\ldots$  & $T_2$   \\
\hline
fight & $p$ & $p\left(1-\frac{m}{N-e}\right)$ & $\ldots$ & $p\prod_{k=1}^{T_2-1}\left[1-\frac{m}{(N-e)-(k-1)(e+m)}\right]$     \\
hide & $1-\frac{m}{N-e}$ & $\prod_{k=1}^2\left[1-\frac{m}{(N-e)-(k-1)(e+m)}\right]$ & $\ldots$ &  $\prod_{k=1}^{T_2}\left[1-\frac{m}{(N-e)-(k-1)(e+m)}\right]$   \\
run &  $p_r$ & $p_r\left(1-\frac{m}{N-e}\right)$ & $\ldots$ & $p_r\prod_{k=1}^{T_2-1}\left[1-\frac{m}{(N-e)-(k-1)(e+m)}\right]$   \\
\hline
\end{tabular}
\end{center}
\caption{Survival probabilities of the civilian of focus in an open arena when $T_1>T_2$, where the $i$-th  cell of the ``fight'' (respectively ``run'') row is the survival probability of the civilian when he hides for the first $(i-1)$ minutes and starts to fight back (respectively ``run'') at the beginning of the $i$-th minute.}
\label{table:sp2.1}
\end{table}
Note that the assumptions on $e, m$, and $N$ imply that all probabilities in Table~\ref{table:sp2.1} are positive.
Based on Table~\ref{table:sp2.1}, we can make almost the same observations as for the case $T_1>T_2$ in Section~2.1.1. For the sake of brevity, they will not be repeated here.  The same will be done for the remainder of the paper unless new insight arises.  Readers will see that Observations~(i)--(vi) at the end of Section~1 will hold in all cases.

\subsubsection{Case II: $T_1\leq T_2$}

In this case, the event will end at time $T_1$ if not before.  Similar to the case where $T_1>T_2$,  we can obtain the survival probabilities of the civilian over time; see Table~\ref{table:sp2.2}. 

\begin{table}[!th]
\begin{center}
\begin{tabular}{ccccccccc}
\hline
minute & $1$ &  $2$ & $\ldots$ & $T_1-1$ & $T_1$ & $T_1+1$  & $\ldots$ & $T_2$\\
\hline
fight & $p$ & $pA_1$ & $\ldots$ & $p\prod_{k=1}^{T_1-2} A_k$ & $p\prod_{k=1}^{T_1-1}A_k$  & $0$  & $\ldots$ & 0 \\
hide & $A_1$ & $\prod_{k=1}^2A_k$ & $\ldots$ &  $\prod_{k=1}^{T_1-1}A_k$ & $0$   & $0$   & $\ldots$ & 0\\
run &  $p_r$ & $p_rA_1$ & $\ldots$ & $p_r\prod_{k=1}^{T_1-2} A_k$  & $p_r\prod_{k=1}^{T_1-1}A_k$  & $0$   & $\ldots$ & 0\\
\hline
\end{tabular}
\end{center}
\caption{Survival probabilities of the civilian of focus in an open arena when $T_1\leq T_2$, where $A_k=\left[1-\frac{m}{(N-e)-(k-1)(e+m)}\right]$ and the $i$-th  cell of the ``fight'' (respectively ``run'') row is the survival probability of the civilian when he hides for the first $(i-1)$ minutes and starts to fight back (respectively ``run'') at the beginning of the $i$-th minute.}
\label{table:sp2.2}
\end{table}

Similar to the case where $T_1>T_2$,  it is easy to see that all probabilities in Table~\ref{table:sp2.2} are positive.  In addition,  all survival probabilities after the $(T_1-1)$-th minute  in the ``hide'' row are zeros because $N=T_1 (e+m)$, i.e.,  all civilians will have either exited the arena or been killed at the end of the $T_1$-th minute.  Again,  we can make almost the same observations as in Section~2.1.2. For example,  if the civilian of focus chooses to hide during the first $T_1-1$ minutes, then he should either fight back or run during the $T_1$-th minute. 

\subsection{Complex arena}
Following Sporrer (2020), we take a complex arena to be an area that allows people to escape through only a finite number of exits.  Most mass shootings in a theater or shopping mall fall into this category.  Similar to Sporrer (2020), let us partition the process of such an incident into three phases: 
\begin{enumerate}
\item[]Phase $0$: initiation,
\item[]Phase $1$: maximum capacity, 
\item[]Phase $2$: seek and shoot.
\end{enumerate}
Phase $0$ is the first few moments after the shooting starts.  Civilians during Phase $0$ often fail to recognize what is going on due to confusion; hence, they do not fight, hide, or run.  Phase $1$ is the time interval during which civilians realize the nature of the incident and many of them rush to escape the arena through the few exits.  People cannot exit the arena freely during Phase $1$ because the highest exit rate of each exit is reached.  Phase $2$ is the last stage of the incident when the density of evacuees is low enough to allow them to exit the arena freely.  Since the majority of civilians will have either exited the arena or been killed,  the shooter needs to seek out potential targets during Phase $2$.  Thus, his killing rate drops during Phase $2$.	

Besides the common assumptions in Section~2,  let us also take the following assumptions:
\begin{enumerate}
\item[(1)]There are $N$ civilians in the complex arena, where $N$ is a positive integer. 
\item[(2)]Due to confusion, no one fights,  hides,  or runs during Phase 0.
\item[(3)]The shooter will kill $N-N_1$ people during Phase 0, where $N_1$ is a positive integer such that $N>N_1$.  His killing rates during Phase 1 and Phase 2 are $m_1$ people per minute and $m_2$ people per minute respectively, where $m_1$ and $m_2$ are positive integers and $m_1>m_2$.  The shooter is equipped with a rifle and plenty of magazines that are far more than enough to kill all the $N$ civilians.  
\item[(4)]Those who decide to hide during Phase 1 will not fight or run during that phase. 
\item[(5)]Phase 1 lasts $n$ minutes. The crowd exits the arena at a rate of $e_1$ people per minute during Phase 1, where $e_1$ is a positive integer.  During each minute of Phase 1, none of the $e_1$ civilians who exit the arena is killed. 
\item[(6)]If an unarmed civilian fights back during Phase 1, his chance of success will be $\tilde{p}_1$. If an armed civilian  fights back during Phase 1, his probability of success will be $\tilde{p}_2$ where $\tilde{p}_2>\tilde{p}_1>0$.  We will use $\tilde{p}$ to denote the survival probability of a civilian where $\tilde{p}=\tilde{p}_1$ if he is unarmed and $\tilde{p}=\tilde{p}_2$ if he is armed.   
\item[(7)]The crowd exits the area at rate of $e_2$ people per minute during Phase 2, where $e_2$ is a positive integer. For each minute of Phase 2, none of the $e_2$ civilians who exit the arena is killed. 
\item[(8)]If an unarmed civilian fights back during Phase 2, his chance of success will be $p_1$. If an armed civilian fights back during Phase 2, his probability of success will be $p_2$ where $p_2>p_1>0$.  We will use $p$ to denote the survival probability of a civilian who fights back, where $p=p_1$ if he is unarmed and $p=p_2$ if he is armed. 
\item[(9)]If a civilian runs during Phase 1,  he can safely exit the arena in a minute with a probability of $p_r$.  If a civilian runs during Phase 2 he can safely exit the arena in a minute with a probability of $\tilde{p}_r$. 
\item[(10)]Let $N_2=N_1-n(e_1+m_1)$ be the number of people in the complex arena at the beginning of Phase 2.  We assume $N_2=T_1(e_2+m_2)$ for some positive integer $T_1$.  
\item[(11)]The shooter seeks and shoots during phase 2 until every remaining civilian in the arena is dead.
\end{enumerate}

During Phase 0, the crowd size goes from $N$ to $N_1$. Hence, the survival probability of each civilian is $N_1/N$. 
The survival probabilities of the civilian during Phase $1$ are the same as those probabilities in Table~\ref{table:sp2.1}, with appropriate substitutions.  For example,  if the civilian chooses to hide for the first $i-1$ minutes but fights back during the $i$-th minute of Phase 1, then his survival probability is
\[
\frac{\tilde{p} N_1}{N} \prod_{k=1}^{i-1}  \left[1-\frac{m_1}{(N-e_1)-(k-1)(e_1+m_1)}\right], \quad \text{$i=1, \ldots, n$}.
\]

\subsubsection{When $T_1>T_2$}
Note that if a civilian is still in the arena and alive at the beginning of Phase 2, he must have survived Phase 0 and Phase 1 and his action in Phase 1 must be ``hide''.  Thus, the probability that the civilian is alive in the arena at the beginning of Phase 2, denoted as $p_s$, is given by
\begin{equation}
\label{eq:survprob}
p_s=\frac{N_1}{N}\prod_{k=1}^{n}  \left[1-\frac{m_1}{(N-e_1)-(k-1)(e_1+m_1)}\right].
\end{equation}

Following the same line of reasoning as in Section~2.2,  we can obtain a civilian's probabilities of survival over time during Phase 2. Table~\ref{table:sp3.1} summarizes these probabilities.  We again make similar observations as in Section~1.2.1 and Section~2.2.1. 

\begin{table}[!th]
\begin{center}
\begin{tabular}{ccccc}
\hline
minute & $1$ &  $2$ & $\ldots$  & $T_2$   \\
\hline
fight & $p_s \tilde{p}$ & $p_s \tilde{p}\left(1-\frac{m_2}{N_2-e_2}\right)$ & $\ldots$ & $p_s \tilde{p}\prod_{k=1}^{T_2-1}\left(1-\frac{m_2}{N'_k}\right)$     \\
hide & $p_s \left(1-\frac{m_2}{N_2-e_2}\right)$ & $p_s\prod_{k=1}^2 \left(1-\frac{m_2}{N_k'}\right)$ & $\ldots$ &  $p_s \prod_{k=1}^{T_2}\left(1-\frac{m_2}{N_k'} \right)$   \\
run &  $p_s \tilde{p}_r $ & $p_s \tilde{p}_r \left(1-\frac{m_2}{N_2-e_2}\right)$ & $\ldots$ & $ p_s \tilde{p}_r\prod_{k=1}^{T_2-1} \left(1-\frac{m_2}{N_k'}\right)$   \\
\hline
\end{tabular}
\end{center}
\caption{Survival probability of the civilian of focus in a complex arena during Phase 2 when $T_1>T_2$,  where $p_s$ is given by (\ref{eq:survprob}),  $N'_k=(N_2-e_2)-(k-1)(e_2+m_2)$, and the $i$-th  cell of the ``fight'' (respectively ``run'') row is the survival probability of the civilian when he hides for the first $(i-1)$ minutes and starts to fight back (respectively ``run'') at the beginning of the $i$-th minute.}
\label{table:sp3.1}
\end{table}

\subsubsection{When $T_1\leq T_2$}
Similar to the case where $T_1<T_2$,  we can derive all the survival probabilities in Table~\ref{table:sp3.2}.  Observations in this case are completely analogous to those in Section~1.2.2 and Section~2.2.2.

\begin{table}[!th]
\begin{center}
\begin{tabular}{ccccccccc}
\hline
minute & $1$ &  $2$ & $\ldots$ & $T_1-1$ & $T_1$   & $T_1+1$ & $\ldots$ & $T_2$\\
\hline
fight & $p_s \tilde{p}$ & $p_s \tilde{p}B_1$ & $\ldots$ & $p_s\tilde{p}\prod_{k=1}^{T_1-2}B_k$  & $p_s\tilde{p}\prod_{k=1}^{T_1-1}B_k$   &$0$  & $\ldots$ & 0\\
hide & $p_s B_1$ & $p_s \prod_{k=1}^2 B_k$ & $\ldots$ &  $p_s \prod_{k=1}^{T_1-1}B_k$ & $0$  & $0$ & $\ldots$ & 0\\
run &  $p_s \tilde{p}_r $ & $p_s \tilde{p}_r B_k$ & $\ldots$ & $ p_s \tilde{p}_r\prod_{k=1}^{T_1-2} B_k$ & $ p_s \tilde{p}_r\prod_{k=1}^{T_1-1} B_k$ & $0$ & $\ldots$ & 0 \\
\hline
\end{tabular}
\end{center}
\caption{Survival probability of the civilian of focus in a complex arena during Phase 2 when $T_1 \leq T_2$, where $p_s$ is given by (\ref{eq:survprob}),  $B_k=\left(1-\frac{m_2}{N'_k} \right)=\left[1-\frac{m_2}{((N_2-e_2)-(k-1)(e_2+m_2))}\right]$, and the $i$-th cell of the ``fight'' (respectively ``run'') row is the survival probability of the civilian when he hides for the first $(i-1)$ minutes and starts to fight back (respectively ``run'') at the beginning of the $i$-th minute.}
\label{table:sp3.2}
\end{table}

\section{Multiple possibly armed civilians}

In this section, we examine the case where there are multiple civilians of focus who may or may not carry a firearm and no other civilians will fight back. In such a case,  it will be shown that the survival probability of a civilian of focus is not necessarily an increasing function of the number of armed civilians.  Intuitively, when ``too many'' civilians have guns and start to shoot, it will cause confusion and lead to civilian deaths.  An extreme case is when the active shooter is dressed in ordinary clothes and carries a handgun, and every civilian in a closed area is armed with a handgun too.  Then an armed civilian is unlikely to identify who the shooter is.  Also,  any armed civilian can mistakenly kill another civilian when he is fighting the shooter.  This is even more true when the civilian is under the stress of life and death.  A key finding in this section is that,  as the number of armed civilians increases, the survival probability of a civilian of focus in this case may first increase and then decrease. Throughout this section, we assume 
\begin{enumerate}
\item[(1)]There is only one active shooter. He shoots indiscriminately into the crowd unless he meets resistance.  He will always kill those who fight back first if he meets resistance.   
\item[(2)]The authority will arrive in $T_2$ minutes, where $T_2$ is a positive integer.  When the authority arrives,  the shooter will kill himself or surrender himself to the authority; otherwise, the authority will neutralize the shooter immediately.
\item[(3)]All civilians are rational decision-makers. They make their decisions without collaborating with others. Each of them wants to maximize their own chance of survival.
\item[(4)]At the beginning of each minute,  a civilian makes his decision for his action during that minute and sticks to it until the beginning of the next minute.
\item[(5)]All civilians will be hit by the shooter with equal probabilities. 
\item[(6)]A bullet strikes at most one person. That is, there are no secondary casualties.
\item[(7)]If an unarmed civilian fights back, no other civilians, armed or not, will follow him.  If an armed civilian fights back, no unarmed civilians will follow him. 
\item[(8)] For $i\geq 1$, the number of survived civilians and armed civilians at the beginning of $i$-th minute are $N_i$ and $K_i$ respectively, where $K_i\leq N_i$,  $N=N_1\geq N_2\geq\ldots $,  and $N\geq K_1\geq K_2\geq \ldots$.  If one of the armed civilians fights back at the beginning of the $i$-th minute,  then $j_{K_i}-1$ of them will follow immediately,  where $1\leq j_{K_i}\leq K_i$ and $j_{K_i}\leq j_{K_{i-1}}$.  
\end{enumerate}

\subsection{Closed arena}

\subsubsection{Already in the closed arena}
Besides the common assumptions for Section~3,  we also take the following assumptions in this section:
\begin{enumerate}
\item[(a)]There are $N$ civilians in the closed arena, where $N$ is a positive integer.  The active shooter is not considered as a civilian.
\item[(b)]The shooter is already at the only door of the closed arena.  There is no other place to exit the room.  Therefore,  a civilian of focus can either fight or hide.  A civilian who does not fight will be killed by the shooter with equal probability.  The shooter will walk around until every civilian is dead. 
\item[(c)]The shooter is equipped with a rifle, and his killing rate is $m$ people per minute, where $m$ is a positive integer,   and $K_i-K_{i-1}<m$, and $N=T_1 m$ for some positive integer $T_1$.   He carries plenty of magazines that are far more than enough to kill all the $N$ civilians.  His rifle will not malfunction during the shooting.  He plans to kill as many civilians as possible.
\item[(d)]When an armed civilian fights back, he will kill the shooter with a probability $p_2$.  At the same time, he may also mistakenly kill another civilian with a probability of $p_f$, where $p_f=cp_2$ for some $0<c<1$.  
\item[(e)]If an unarmed civilian fights back, he will subdue the shooter with a probability $p_1$, where $p_2>p_1>0$. 
\end{enumerate}

Let us look at the $i$-th minute. First, the probability that no civilian is mistakenly killed when an armed civilian fights back equals $(1-p_f)^{j_{K_i}}$. Next,  the event that an armed civilian fights back and survives the shooting is the same as the event that the shooter is killed by an armed civilian or the remaining $j_{K_i}-1$ armed civilians who follow him and, at the same, he is not mistakenly killed by the other $j_{K_i}-1$ armed civilians.  The probability that the shooter is killed by one of the $j_{K_i}$ armed civilians is $1-(1-p_2)^{j_{K_i}}$. The probability that an armed civilian is not killed by another armed civilian when he fights back is $[1-p_f/(N_i-1)]^{j_{K_i}-1}$ where $N_i=N-(i-1)m$. 
Hence, 
\begin{eqnarray}
\label{eq:prob1}
g(K_i )& \equiv& \mathbb{P}\{\text{an armed civilians fights back and survives}\} \nonumber \\
 &  = &[1-(1-p_2)^{j_{K_i}}][1-p_f/(N_i-1)]^{j_{K_i}-1},
\end{eqnarray}
where  $K_i\in \{1, 2, \ldots, N_i\}$.

\begin{prop}
\label{prop:closed}
Suppose $j_{K_i}=K_i$ and $\frac{[1-p_f/(N_i-1)]} {[(1-p_f/(N_i-1))(1-p_2)]^{(1-p_2)^2}} >1$.  Then the function $g$, given by (\ref{eq:prob1}), first increases and then decreases.
\end{prop}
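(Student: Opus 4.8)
\section*{Proof proposal}

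The plan is to strip (\ref{eq:prob1}) down to a clean two-parameter expression, extend it to a continuous variable, and show that its derivative changes sign exactly once, from positive to negative. Under the hypothesis $j_{K_i}=K_i$, and writing $a:=1-p_2$ and $b:=1-p_f/(N_i-1)$, the standing positivity assumptions give $a,b\in(0,1)$, and (\ref{eq:prob1}) becomes
\[
g(K)=\bigl(1-a^{K}\bigr)\,b^{K-1}.
\]
I would treat $K$ as a real variable $x\geq 1$ and reduce the unimodality claim to a single-crossing statement for $g'$.

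A direct application of the product rule, factoring out the positive quantity $b^{x-1}$, gives
\[
g'(x)=b^{x-1}\bigl[\ln b-a^{x}\ln(ab)\bigr],
\]
so the sign of $g'$ is governed entirely by $h(x):=\ln b-a^{x}\ln(ab)$. The structural fact I would lean on is that $h$ is \emph{strictly decreasing}: since $h'(x)=-a^{x}\ln a\,\ln(ab)$ with $\ln a<0$ and $\ln(ab)<0$, we get $h'(x)<0$. Moreover $h(x)\to\ln b<0$ as $x\to\infty$. Hence $h$ has at most one zero $x^{\star}$, is positive to its left and negative to its right, which is exactly the statement that $g$ is increasing on $[1,x^{\star}]$ and decreasing on $[x^{\star},\infty)$ once $x^{\star}$ is shown to sit properly inside the domain.

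The role of the hypothesis is precisely to push $x^{\star}$ to the right of $2$. Taking logarithms, the assumed inequality
\[
\frac{b}{(ab)^{a^{2}}}>1
\]
is exactly $\ln b-a^{2}\ln(ab)>0$, i.e.\ $h(2)>0$, since $(1-p_2)^2=a^2$. Because $h$ is strictly decreasing, $h(2)>0$ forces $x^{\star}>2$, so $h>0$ on $[1,2]$; thus $g$ strictly increases there (in particular $g(1)<g(2)$) and decreases beyond $x^{\star}$. Restricting back to integers $K\in\{1,\dots,N_i\}$ then yields the claimed ``first increases, then decreases'' behavior, with a genuine initial increase guaranteed by $g(1)<g(2)$.

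I expect the only genuinely non-routine step to be the recognition that the opaque hypothesis is, after taking logs, the single evaluation $h(2)>0$, and that it is the monotonicity of the \emph{auxiliary} function $h$ (rather than any convexity of $g$ itself) that delivers unimodality. The derivative and limit computations are mechanical; the care required is merely to confirm from the standing assumptions $0<p_2<1$ and $0<p_f<N_i-1$ that $a,b\in(0,1)$, so that $\ln a$, $\ln b$, and $\ln(ab)$ all carry the correct (negative) signs used above.
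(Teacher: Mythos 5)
Your proof is correct and follows essentially the same route as the paper's: both extend $g$ to a continuous variable $x$, compute $g'(x)=b^{x-1}\bigl[\ln b-a^{x}\ln(ab)\bigr]$ with $a=1-p_2$, $b=1-p_f/(N_i-1)$, and identify the hypothesis as exactly the condition that this bracket is positive at $x=2$. If anything, your write-up is the more complete of the two---the paper stops at ``$f'(2)>0$ if and only if the hypothesis holds, therefore the proposition follows,'' leaving implicit the single-crossing argument (strict monotonicity of $h(x)=\ln b-a^{x}\ln(ab)$ together with $\lim_{x\to\infty}h(x)=\ln b<0$) that you spell out, and you also correctly flag the residual caveat, shared with the paper, that the crossing point must land inside $\{1,\dots,N_i\}$ for the eventual decrease to be visible on the discrete domain.
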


\begin{proof}
Let $f(x)=[1-(1-p_2)^x][1-p_f/(N_i-1)]^{x-1}$ where $x\geq 1$. Then $f(x)$  is proportion to $[1-p_f/(N_i-1)]^x-[(1-p_2)(1-p_f/(N_i-1))]^x$. Therefore, 
\begin{eqnarray}
f'(x) &\propto& [1-p_f/(N_i-1)]^x\ln [1-p_f/(N_i-1)]\\
        & &-\left[(1-p_2)(1-p_f/(N_i-1))]^x \ln[(1-p_2)(1-p_f/(N_i-1))\right]  \nonumber \\ 
	&=& [1-p_f/(N_i-1)]^x \left[\ln(1-p_f/(N_i-1))-(1-p_2)^x\ln[(1-p_f/(N_i-1))(1-p_2)] \right] \nonumber \\
         &=& [1-p_f/(N_i-1)]^x\ln \left[ \frac{(1-p_f/(N_i-1))} {[(1-p_f/(N_i-1))(1-p_2)]^{(1-p_2)^x}} \right].
\end{eqnarray}
Therefore, $f'(x)>0$ if and only if 
\begin{equation}
\frac{[1-p_f/(N_i-1)]} {[(1-p_f/(N_i-1))(1-p_2)]^{(1-p_2)^x}} >1.
\end{equation}
In particular,  $f'(2)>0$ if and only if 
\begin{equation}
\frac{[1-p_f/(N_i-1)]} {[(1-p_f/(N_i-1))(1-p_2)]^{(1-p_2)^2}} >1.
\end{equation}
Therefore, the proposition follows.
\end{proof}
\noindent \textbf{Remark.} The condition $\frac{[1-p_f/(N_i-1)]} {[(1-p_f/(N_i-1))(1-p_2)]^{(1-p_2)^2}} >1$ is relatively mild. For example, it is satisfied when $p_2=0.45$,  $p_f=0.2$, and $N_i=20$; see Figure~\ref{fig:prob} for a plot of $g$ in this case.

\begin{figure}[!th]
\begin{center}
\scalebox{0.5}{\includegraphics{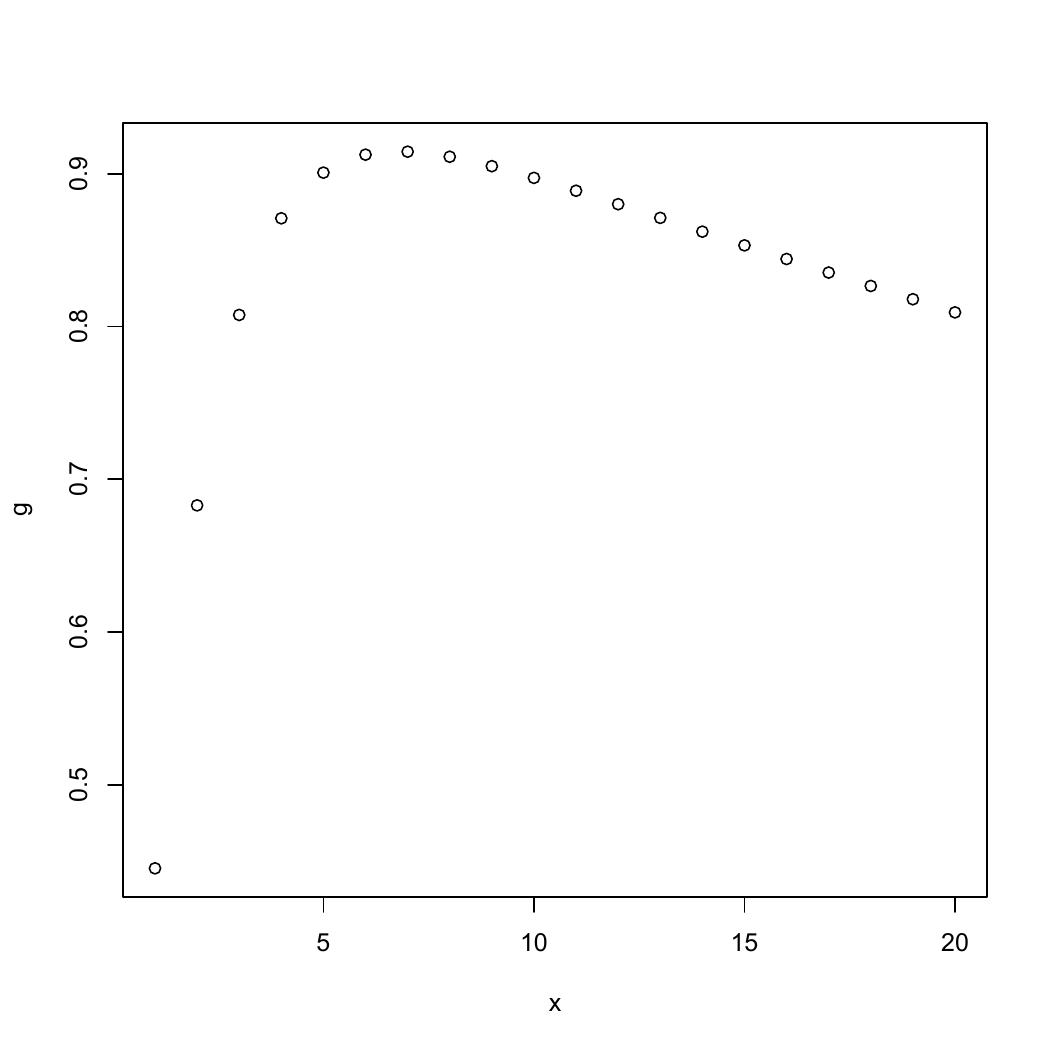}}
\end{center}
\caption{Plot of $g(K_i)$ when $j_{K_i}=K_i$, $p_2=0.45$,  $p_f=0.2$, and $N_i=20$.}
\label{fig:prob}
\end{figure}

Proposition~\ref{prop:closed} shows that more armed civilians do not necessarily increase the survival probability of a civilian of focus. Quite the opposite,  when there are ``too many'' armed civilians, the survival probability of a civilian of focus starts to decrease due to the increased likelihood of ``friendly fire''.  An interesting practical implication of this is that the scenario where everyone carries a firearm might be worse than the scenario where only a few of them has a firearm. \\

For $i=1, \ldots, \max\{T_1, T_2\}$, we put $p_{2i}=[1-(1-p_2)^{j_{K_i}}][1-p_f/(N_i-1)]^{j_{K_i}-1}$ if a civilian is armed and $p_{2i}=p_1$ if he is unarmed.   Similar to our analysis in Section~2.1, we can derive the survival probabilities of a civilian over times for both $T_1>T_2$ and  $T_1\leq T_2$.   We summarize the results in Table~\ref{table:sp4.1} and Table~\ref{table:sp4.2}  respectively.  The respective observations for both cases are similar to those in Section~2.1.1. For example,  in the case where $T_1>T_1$, if  $p_{2T_i}[1-(i-1)m/N]>(1-im/N)$ for some $1\leq i \leq T_2$ but $p_{2j}[1-(j-1)m/N]<(1-jm/N)$ for $1\leq j\leq i-1$,  then a civilian of focus should fight back during the $i$-th minute; otherwise,  ``hide'' will be a better option.  On the other hand,  if $p_{2i}[1-(i-1)m/N]<(1-im/N)$ for all $1\leq i\leq T_2$, or, equivalently,  $p_{2i}<[N-im]/[N-(i-1)m]=1-m/[N-(i-1)m]$ for all $1\leq i\leq T_2$,  then a civilian of focus should hide.  
\begin{table}[!th]
\begin{center}
\begin{tabular}{cccccccc}
\hline
minute & $1$ &  $2$ & $\ldots$  & $T_2$ & $T_2+1$ & $\ldots$ & $T_1$\\
\hline
fight & $p_{21}$ & $p_{22}\left(1-\frac{m}{N}\right)$   & $\ldots$  & $p_{2T_2}\left[1-\frac{({T_2}-1)m}{N}\right]$  & $1-\frac{{T_2}m}{N}$ & $\ldots$  & $1-\frac{{T_2}m}{N}$ \\
hide & $1-\frac{m}{N}$ & $1-\frac{2m}{N}$ & $\ldots$  &  $1-\frac{{T_2}m}{N}$ & $1-\frac{{T_2}m}{N}$ & $\ldots$ & $1-\frac{{T_2}m}{N}$  \\
\hline
\end{tabular}
\end{center}
\caption{Survival probabilities of a civilian of focus in a closed arena when $T_1>T_2$, where the $i$-th  cell of the ``fight'' row is the survival probability of the civilian when he hides for the first $(i-1)$ minutes and starts to fight back at the beginning of the $i$-th minute.}
\label{table:sp4.1}
\end{table}

\bigskip

\begin{table}[!th]
\begin{center}
\begin{tabular}{cccccccc}
\hline
minute & $1$ &  $2$ & $\ldots$  & $T_1-1$ & $T_1$ & $T_1+1$ & $\ldots$\\
\hline
fight & $p_{21}$ & $p_{22}\left(1-\frac{m}{N}\right)$   & $\ldots$ & $p_{2(T_1-1)}\left[1-\frac{(T_1-2)m}{N}\right]$ & $p_{2T_1}\left[1-\frac{(T_1-1)m}{N}\right]$  & $0$ & $\ldots$  \\
hide & $1-\frac{m}{N}$ & $1-\frac{2m}{N}$ & $\ldots$  &  $1-\frac{(T_1-1)m}{N}$ & $0$ & $0$ & $\ldots$   \\
\hline
\end{tabular}
\end{center}
\caption{Survival probabilities of a civilian in a closed arena when $T_1\leq T_2$,  where the $i$-th  cell of the ``fight'' row is the survival probability of the civilian when he hides for the first $(i-1)$ minutes and starts to fight back at the beginning of the $i$-th minute.}
\label{table:sp4.2}
\end{table}

\subsubsection{Hallway interaction}
As for the hallway interaction, the assumptions, analysis, and observations are similar to those in Section~2.1.2 except that $p_2$  is replaced by $p_{21}$.  It is easy to see again that the survival probability of a civilian of focus, as a function of the number of armed civilians, first increases and then decreases.

\subsection{Open arena}
The first six assumptions in Section~2.2 will be taken here along with Assumption~(a) in Section~3.1.  The survival probabilities of a civilian of focus are the same as in Section~2.2 except that $p$ is replaced by $p_{2i}$ in all expressions for $1\leq i \leq \max \{T_1, T_2\}$; see Table~\ref{table:sp5.1} and Table~\ref{table:sp5.2}.  Our observations are the same as in Section~2.2.  Again, we can easily see that a civilian's survival probability,  as a function of the number of armed civilians,  first increases and then decreases. 

\begin{table}[!th]
\begin{center}
\begin{tabular}{ccccc}
\hline
minute & $1$ &  $2$ & $\ldots$  & $T_2$   \\
\hline
fight & $p_{21}$ & $p_{22}\left(1-\frac{m}{N-e}\right)$ & $\ldots$ & $p_{2T_1}\prod_{k=1}^{T_2-1}\left[1-\frac{m}{(N-e)-(k-1)(e+m)}\right]$     \\
hide & $1-\frac{m}{N-e}$ & $\prod_{k=1}^2\left[1-\frac{m}{(N-e)-(k-1)(e+m)}\right]$ & $\ldots$ &  $\prod_{k=1}^{T_2}\left[1-\frac{m}{(N-e)-(k-1)(e+m)}\right]$   \\
run &  $p_r$ & $p_r\left(1-\frac{m}{N-e}\right)$ & $\ldots$ & $p_r\prod_{k=1}^{T_2-1}\left[1-\frac{m}{(N-e)-(k-1)(e+m)}\right]$   \\
\hline
\end{tabular}
\end{center}
\caption{Survival probabilities of a civilian of focus in an open arena when $T_1>T_2$,  where  the $i$-th  cell of the ``fight'' (respectively ``run'') row is the survival probability of the civilian when he hides for the first $(i-1)$ minutes and starts to fight back (respectively ``run'') at the beginning of the $i$-th minute.}
\label{table:sp5.1}
\end{table}

\begin{table}[!th]
\begin{center}
\begin{tabular}{ccccccc}
\hline
minute & $1$ &  $2$ & $\ldots$  & $T_1-1$ & $T_1$ & $T_1+1$  \\
\hline
fight & $p_{21}$ & $p_{22}A_1$ & $\ldots$ & $p_{2(T_1-1)}\prod_{k=1}^{T_1-2}A_k$ & $p_{2T_1}\prod_{k=1}^{T_1-1}A_k$  & $0$   \\
hide & $1-\frac{m}{N-e}$ & $\prod_{k=1}^2A_k$ & $\ldots$ &  $\prod_{k=1}^{T_1-1}A_k$   & $0$ & $0$   \\
run &  $p_r$ & $p_rA_1$ & $\ldots$ & $p_r\prod_{k=1}^{T_1-2}A_k$ & $p_r\prod_{k=1}^{T_1-1}A_k$  & $0$   \\
\hline
\end{tabular}
\end{center}
\caption{Survival probabilities of a civilian in an open arena when $T_1\leq T_2$,  where $A_k=\left[1-\frac{m}{(N-e)-(k-1)(e+m)}\right]$ and the $i$-th  cell of the ``fight'' (respectively ``run'') row is the survival probability of the civilian when he hides for the first $(i-1)$ minutes and starts to fight back (respectively ``run'') at the beginning of the $i$-th minute..}
\label{table:sp5.2}
\end{table}

\begin{example}
Suppose $N=210$, $m=15$, $e=25$, $p_r=0.1$, $p_1=0.05$, $p_2=0.3$,  $c=1/3$ (i.e., $p_f=0.1$),  $K_i=\lfloor N_i/4\rfloor$,  $j_{K_i}=\lfloor N_i/20\rfloor$,  and $T_2=4$, where $\lfloor a\rfloor$ is the greatest integer less than or equal to $a$.  In this case, $T_1=N/(e+m)=5$ and Table~\ref{table:sp5.1} specializes to Table~\ref{table:example1.1}.

\begin{table}[!th]
\begin{center}
\begin{tabular}{ccccccccc}
\hline
minute & $1$ &  $2$ & $3$ & $4$  & $5$   & $6$ & $\ldots$\\
\hline
fight (armed) & $0.968$ &  $0.863$ & $0.724$ &  $0.534$ & $0.276$ & $0.276$ &$\ldots$ \\
fight (unarmed) & $0.050$ & $0.046$ & $0.041$ & $0.035$  & $0.027$ & $0.027$ & $\ldots$ & \\
hide & $0.919$ & $0.824$ & $0.706$ & $0.543$ &  $0.217$ &  $0.217$& $\ldots$\\
run &  $0.100$ &  $0.092$ & $0.083$ & $0.071$ & $0.054$ & $0.054$ & $\ldots$\\
\hline
\end{tabular}
\end{center}
\caption{Survival probabilities in Table~\ref{table:sp5.1}  when  $N=210$, $m=15$, $e=25$, $p_r=0.1$, $p_1=0.05$, $p_2=0.3$,  $p_f=0.1$, $K_i=N_i/4$,  $j_{K_i}=N_i/20$,  and $T_2=4\leq T_1=5$.}
\label{table:example1.1}
\end{table}


\begin{figure}[!th]
\begin{center}
\includegraphics[scale=0.45]{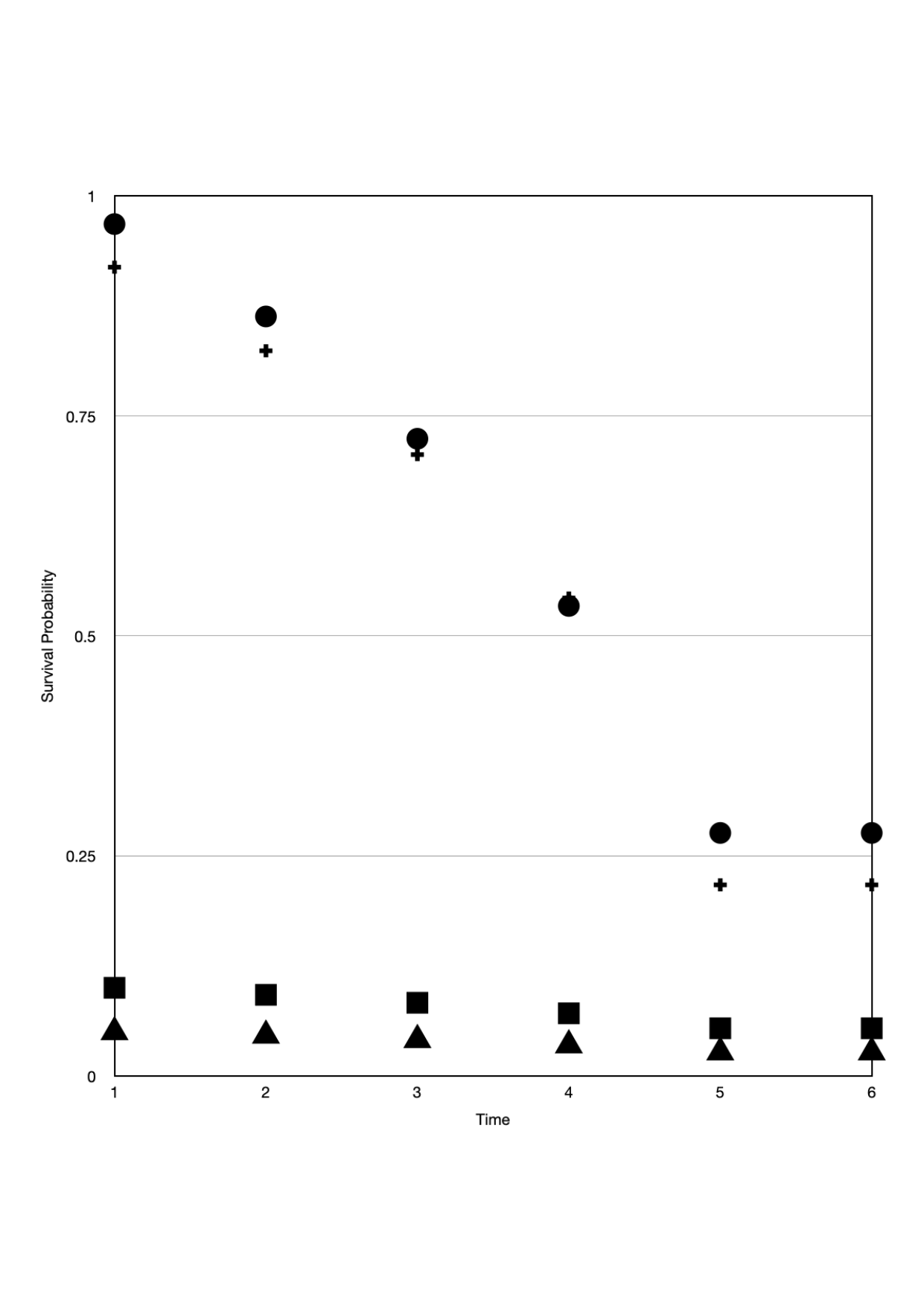}
\end{center}
\caption{Plots of the survival probabilities in Table~\ref{table:example1.1}, where the circle, triangle,  plus sign, and rectangle, symbols denote ``fight (armed)'', ``fight (unarmed)'', ``hide'', and ``run''  rows respectively.  .}
\label{fig:survprob}
\end{figure}

Figure~\ref{fig:survprob} displays a scatter plot of the survival probabilities in Table~\ref{table:example1.1}.  Clearly, if a civilian of focus is unarmed, he should hide throughout.  However, if he is armed, the should always fight except during the fourth minute where he should hide.  Table~\ref{table:example1.2} is the special case of Table~\ref{table:sp2.1} under the same setup when there is exactly one possibly armed civilian of focus.  Comparing the first row of Table~\ref{table:example1.1} with that of Table~\ref{table:example1.2}, we see that it might be better to fight earlier when there are more armed civilians, and more armed civilian will significantly increase the survival chance of an armed civilian because some armed civilians  will join him when he fights back.  This is different from the situation depicted by Figure~\ref{fig:prob} where more armed civilians is not necessary better.  Table~\ref{table:example1.2} also shows that the civilian should hide throughout. Therefore,  ``hide'' might be the best strategy when there is only one armed civilian, and when there are more armed civilians, ``fight'' might be the optimal action.

\begin{table}[!th]
\begin{center}
\begin{tabular}{ccccccccc}
\hline
minute & $1$ &  $2$ & $3$ & $4$  & $5$   & $6$ & $\ldots$\\
\hline
fight (armed) & $0.300$ &  $0.276$ & $0.249$ &  $0.213$ & $0.162$ & $0.162$ &$\ldots$ \\
fight (unarmed) & $0.050$ & $0.046$ & $0.041$ & $0.035$  & $0.027$ & $0.027$ & $\ldots$ & \\
hide & $0.919$ & $0.824$ & $0.706$ & $0.543$ &  $0.217$ &  $0.217$& $\ldots$\\
run &  $0.100$ &  $0.092$ & $0.083$ & $0.071$ & $0.054$ & $0.054$ & $\ldots$\\
\hline
\end{tabular}
\end{center}
\caption{Survival probabilities of a civilian of focus in an open arena when $N=210$, $m=15$, $e=25$, $p_r=0.1$, $p_1=0.05$, $p_2=0.3$, and $T_2=4\leq T_1$ when there is only one possibly armed civilian.}
\label{table:example1.2}
\end{table}



\end{example}

\subsection{Complex arena}
In this section,  we take all assumptions in Section~2.3 except that 
Assumption~(6) there will be replaced by
\begin{enumerate}
\item[(6')]If an unarmed civilian fights back during Phase 2, his chance of success will be $\tilde{p}_1$. If an armed civilian  fights during Phase 2, his probability of success will be $\tilde{p}_2$ where $\tilde{p}_2>\tilde{p}_1>0$.  
\end{enumerate}
We will also make the following assumptions:
\begin{enumerate}
\item[(13)]When an armed civilian fights back during Phase 1, he will kill the shooter with a probability $\tilde{p}_2$.  At the same time, he may also mistakenly kill another civilian with a probability of $\tilde{p}_f$.  We assume that these two events are independent and that  $0<\tilde{p}_f<\tilde{p}_2$.
\item[(14)]We assume $j_{K_i}\leq K_i\leq N'_i=N_2-(i-1)(e_2+m_2)$, where $N_2=N-n(e_1+m_1)$.
\end{enumerate}
For $i=1, \ldots, \max\{T_1, T_2\}$, we put $\tilde{p}_{2i}=[1-(1-\tilde{p}_2)^{j_{K_i}}][1-\tilde{p}_f/(N'_i-1)]^{j_{K_i}-1}$ if a civilian is armed and $\tilde{p}_{2i}=\tilde{p}_1$ if he is unarmed.  Then all results in Section~2.3 carry over here except that $\tilde{p}$ is replaced by $\tilde{p}_{2i}$ in all expressions  for $1\leq i\leq \max \{T_1, T_2\}$; see Table~\ref{table:sp6.1} and Table~\ref{table:sp6.2}. \\

\begin{table}[!th]
\begin{center}
\begin{tabular}{ccccc}
\hline
minute & $1$ &  $2$ & $\ldots$  & $T_2$   \\
\hline
fight & $p_s\tilde{p}_{21}$ & $p_s\tilde{p}_{22}\left(1-\frac{m_2}{N_2-e_2}\right)$ & $\ldots$ & $p_s\tilde{p}_{2T_2}\prod_{k=1}^{T_2-1}\left(1-\frac{m_2}{N'_k}\right)$     \\
hide & $p_s\left(1-\frac{m_2}{N_2-e_2}\right)$ & $p_s\prod_{k=1}^2 \left(1-\frac{m_2}{N'_k}\right)$ & $\ldots$ &  $p_s\prod_{k=1}^{T_2}\left(1-\frac{m_2}{N_k'} \right)$   \\
run &  $p_s \tilde{p}_r$ & $p_s\tilde{p}_r \left(1-\frac{m_2}{N_2-e_2}\right)$ & $\ldots$ & $p_s \tilde{p}_r \prod_{k=1}^{T_2-1} \left(1-\frac{m_2}{N_k'}\right)$   \\
\hline
\end{tabular}
\end{center}
\caption{Survival probability of a civilian of focus in a complex arena during Phase 2 when $T_1>T_2$, where $p_s$ is given by (\ref{eq:survprob}),  $N_k'=(N_2-e_2)-(k-1)(e_2+m_2)$, and the $i$-th  cell of the ``fight'' (respectively ``run'') row is the survival probability of the civilian when he hides for the first $(i-1)$ minutes and starts to fight back (respectively ``run'') at the beginning of the $i$-th minute.}
\label{table:sp6.1}
\end{table}

\begin{table}[!th]
\begin{center}
\begin{tabular}{ccccccc}
\hline
minute & $1$ &  $2$ & $\ldots$  & $T_1-1$  & $T_1$   & $T_1+1$\\
\hline
fight & $p_s\tilde{p}_{21}$ & $p_s\tilde{p}_{22} C_1$ & $\ldots$ & $p_s\tilde{p}_{2(T_1-1)}C_{T_1-2}$ & $p_s \tilde{p}_{2T_1}C_{T_1-1}$   & $0$ \\
hide & $p_s p_h C_1$ & $p_sC_2$ & $\ldots$ &  $p_sC_{T_1-1}$   & $0$ & $0$\\
run &  $p_s\tilde{p}_r$ & $p_s\tilde{p}_rC_1$ & $\ldots$ & $p_s \tilde{p}_rC_{T_1-2}$  & $p_s \tilde{p}_rC_{T_1-1}$  & $0$ \\
\hline
\end{tabular}
\end{center}
\caption{Survival probability of a civilian of focus in a complex arena when $T_1 \leq T_2$, where $p_s$ is given by (\ref{eq:survprob}),  $N_k'=\left(1-\frac{m_2}{N'_k} \right)=\left[1-\frac{m_2}{((N_2-e_2)-(k-1)(e_2+m_2))}\right]$,  $C_i=\prod_{k=1}^i\left(1-\frac{m_2}{N_k'}\right)=\prod_{k=1}^i\left[1-\frac{m_2}{((N_2-e_2)-(k-1)(e_2+m_2))}\right]$, and the $i$-th  cell of the ``fight'' (respectively ``run'') row is the survival probability of the civilian when he hides for the first $(i-1)$ minutes and starts to fight back (respectively ``run'') at the beginning of the $i$-th minute..}
\label{table:sp6.2}
\end{table}

\newpage
\section{Concluding remarks}
The past two decades have witnessed an increasing number of active shooter incidents in the US.  Several scholars have proposed analytical and agent-based simulation models to study these events.  However, no analytical model has taken civilian response into consideration.  Also,  most of these models focus on the expected number of civilian casualties.
We have investigated the optimal action of a civilian during an active shooter event where a civilian can fight,  hide, or run.  Unlike previous authors,  we incorporated civilian resistance into our models and focused on the survival probability of a civilian.  We considered three types of arenas: closed, open, and complex. For each type of arena,  we investigated the case where there is one possibly armed civilian and the case where there are multiple possibly armed civilians.   We have found the following results: 
\begin{enumerate}
\item[(i)]A civilian's chance of survival decreases over time,  irrespective of his action.
\item[(ii)]There are desperate situations in which a civilian should fight back even if he is unarmed.
\item[(iii)]Carrying a firearm does not always increase the civilian's chance of survival during an active shooter incident.
\item[(iv)]Carrying a firearm makes ``resistance'' more likely to be the optimal action of the civilian. 
\item[(v)] ``Hide'' might be the best action even if the civilian is armed. 
\item[(vi)]More armed civilians might increase a civilian's chance of survival during an active shooter event, though that is not always the case.
\end{enumerate}

The setup of our model is simpler than that of many agent-based models in the literature.  However, this should be a reason to dismiss an analytic model. In an analytic model approach,  the number of scenarios we need to consider grows exponentially when the number of factors increases.  If there are $n$ factors and each of them has a binary outcome, then we need to consider $2^n$ scenarios.  Therefore, we limited the number of factors in our model only to a few essential ones to make the analysis manageable. This is one inherent limitation of an analytic model.  However,  the conclusions reached using analytic models are analytically valid---an agent-based model can never achieve that. On the other hand, an agent-based model has the advantage that many factors can be included. Thus, they are equally important in the sense they can provide important insight empirically.

\section*{Conflict of interest}
The author has no conflict of interest to disclose.


\section*{References}

\begin{description}

\item{} Anklam, C., Kirby, A., Sharevski, F.~and Dietz, J.E.~(2015).  Mitigate active shooter impact; analysis for policy options based on agent/computer-based modeling. \emph{Journal of Emergency Management}~13(3), 201--216.

\item{} Blum, D.~and Jaworski, C. G. ~(2017). Spatial patterns of mass shootings in the United States, 2013–2014. In L., Leonard (Ed.), \emph{Environmental criminology} (Vol. 20, pp. 57–68). Emerald Publishing Limited. 

\item{} Cabrera, J. F.,~and Kwon,  R. ~(2018). Income inequality, household income, and mass shooting in the United States. \emph{Frontiers in Public Healt}~6, 294.

\item{} Cao,  L. , Mei, X.~and~Li, J.~(2024). Correlates of severity of mass public shootings in
the United States, 1966-–2022. \emph{Journal of Applied Security Research}, to appear, \url{DOI: 10.1080/19361610.2024.2314405}.

\item{} Dong, T.~(2019). Run, Hide, Fight—From the AR-15? A study of firearms depictions in active/mass shooter instructional videos. \emph{Journal of Applied Security Research}~14(3), 329--349

\item{} FBI~(2016).  Active shooter resources.  \url{https://www.fbi.gov/about/partnerships/office}\\
\url{-of-partner-engagement/active-shooter-resources}.

\item{} FBI~(2019).  Active shooter incidents 20-year review, 2000-2019.  \emph{Federal Bureau of Investigation}, \url{https://www.fbi.gov/file-repository/active shooter-incidents-20-year-}\\
\url{review-2000-2019-060121.pdf/view}.

\item{} FBI~(2023).  Active shooting incidents in the US in 2022. \emph{Federal Bureau of Investigation}, \url{https://www.fbi.gov/file-repository/active shooter-incidents-in-the-us-}\\
\url{2022-042623.pdf/view}.

\item{} FBI~(2024).  Active shooting incidents in the US in 2023. \emph{Federal Bureau of Investigation}, \url{https://www.fbi.gov/file-repository/2023-active-shooter-report-062124.pdf/view}

\item{} Fridel, E. E. ~(2021). Comparing the impact of household gun ownership and concealed carry legislation on the frequency of mass shootings and firearms homicide.  {Justice Quarterly}~38(5), 892--915.

\item{} Geller, L. B., Booty, M.~and Crifasi, C. K. ~(2021). The role of domestic violence in fatal mass shootings in the United States, 2014–2019.  \emph{Injury Epidemiology}, 8, 1–8. 

\item{} Gius, M. ~(2015). The impact of state and federal assault weapons bans on public mass shootings.  \emph{Applied Economics Letters}~22(4), 281--284.

\item{} Glass, P., Iyer, S., Lister-Gruesbeck, K. Schulz, N.~and Dietz, J.E.~(2018). Use of computer simulation  modeling to reduce the consequences of an active shooter event in a large event venue. \emph{2018 IEEE International Symposium on Technologies for Homeland Security}~1062--1065. 

\item{} Gunn, S., Luh, P.B., Lu, X., Hoteling, B.~(2017). Optimizing guidance for an active shooter event. \emph{2017 IEEE International Conference on Robotics and Automation}~4299--4304. 

\item{} Hayes, R. ~and Hayes, R. ~(2014). Agent-based simulation of mass shootings: determining how to limit the scale of a tragedy. \emph{Journal of Artificial Societies and Social Simulation}~17(2): 1--13.

\item{} Kaplan, E.H.~and Kress, M.~(2005). Operational effectiveness of suicide-bomber-detector schemes: a best-case analysis. \emph{Proceedings of the National Academy of Sciences}~102(29), 10399--10404.

\item{} Kingshott, B.F.~and McKenzie, D.G.~(2013).  Developing crisis management protocols in the context of school safety. \emph{Journal of Applied Security Research}~8(2), 222--245.

\item{} Kwon, R.,~and Cabrera, J. F. ~(2019).  Socioeconomic factors and mass shootings in the United States.  \emph{Critical Public Health}~29(2), 138--145.

\item{} Kress, M.~(2005). The effect of crowd density on the expected number of causalities in a suicide attack.  \emph{Naval Research Logistics}~52, 22--29. 

\item{} Lankford, A. ~(2015). Mass shooters in the USA, 1966--2010: differences between attackers who live and die.~\emph{Justice Quarterly}~32(2), 360--379.

\item{} Lankford, A. ~(2016).  Fame-seeking rampage shooters: initial findings and empirical predictions. ~\emph{Aggression and Violent Behavior}~27, 122--129. 

\item{} Lee, J.H.~(2013). School shooting in the US public schools: analysis through the eyes of an educator. \emph{Review of Higher Education and Self-learning}~6,(22), 88--120. 

\item{} Lee, J.Y., Dietz, E.~and Ostrowski, K.~(2018). Agent-based modeling for casualty rate assessment of large event active shooter incidents.  \emph{2018 Winter Simulation Conference}, 2737--2746.

\item{} Lei, X.~and MacKenzie, C.~(2024). Quantifying the risk of mass shootings at specific locations. \emph{Risk Analysis}, to appear, \url{https://doi.org/10.1111/risa.14197}.

\item{} Lemieux, F.~(2014).  Effect of gun culture and firearm laws on gun violence and mass shootings in the United States: a multi-level quantitative analysis.  \emph{International Journal of Criminal Justice Science}~9(1)., 74--93. 

\item{} Lin,  P.I., Fei, L., Barzman, D.~and Hossain, M.~(2018). What have we learned from the time trend of mass shooting in the US? \emph{PloS One}~13(10): e0204722.


\item{} Lu,  P., Li, Y., Wen, F.~and Chen, D.~(2023). Agent-based modeling of mass shooting case with the counterforce of policemen. \emph{Complex $\&$ Intelligent Systems}~9, 5093--5113.

\item{} Metzl, J.M.~and MacLeish, K.T.~(2015). Mental illness, mass shootings, and the politics of American firearms. \emph{American Journal of Public Health}~105(2): 240--249. 


\item{} Silva, J. R.~and Greene-Colozzi, E. A. ~(2021).  Mass shootings and routine activities theory: The impact of motivation, target suitability, and capable guardianship on fatalities and injuries.  \emph{Victims and Offenders}~16(4), 565–-586.

\item{} Soni, A.~and Tekin, E. ~(2020). How do mass shootings affect community wellbeing? Technical Report, National Bureau of Economic Research.

\item{} Sporrer, A.~(2020). Analytic models for active shooter incidents.  \emph{Technical Report of Naval Postgraduate School}. \url{https://apps.dtic.mil/sti/citations/AD1114719}.

\item{} Steward,  A.~(2017).  Active shooter simulations: An agent-based model of civilian response strategy.  \emph{Technical Report of Iowa State University}.  \url{https://doi.org/10.31274/etd-180810-5235}. 

\item{} The Violence Project~(2020).  Mass shooting database locations.  Technical Report.  \url{https://www.theviolenceproject.org/mass-shooter-database}.

\item{} Washburn, A.~and Kress, M.~(2009).  \emph{Combat Modeling}.  Springer: New York. 

\item{} Wiki (2023).  List of mass shootings in the United States, Wikipedia, \url{https://en.wikipedia.org/wiki/List_of_mass_shootings_in_the_United_States}.
\end{description}

\end{document}